\newtheorem{theorem}{Theorem}
\newtheorem{definition}{Definition}
\newtheorem{proposition}{Proposition}
\newtheorem{lemma}{Lemma}
\newcommand{\vp}{\varphi}
\newcommand{\imag}{\text{Im\,}}
\newcommand{\real}{\text{Re\,}}
\newcommand{\Cbb}{\mathbb{C}}
\newcommand{\Rbb}{\mathbb{R}}
\newcommand{\Nbb}{\mathbb{N}}
\newcommand{\sech}{\text{sech}}
\newcommand{\mgap}{\vspace{24pt}}
\newcommand{\sgap}{\vspace{12pt}}
\begin{document}

\title[The Pseudospectrum of the  Zakharov-Shabat System]{The Pseudospectrum\\
    of the Zakharov-Shabat System}
\author{Michael VanValkenburgh}
\maketitle

\begin{abstract}
	We study the pseudospectrum of the non-selfadjoint Zakharov-Shabat system in the semiclassical regime. The pseudospectrum may be defined as the union of the spectra of perturbations of the Zakharov-Shabat system, thus it is relevant to the numerical computation of true eigenvalues.
\end{abstract}

\mgap

\section{Introduction}

The Zakharov-Shabat system is the non-selfadjoint system of
first-order differential equations given by
\begin{equation}\label{E:tradZS}
    h\partial_{x}
    \left(
    \begin{matrix}
        u_{1}\\
        u_{2}
    \end{matrix}
    \right)
    =
    \left(
    \begin{matrix}
        -i\lambda &A(x)e^{iS(x)/h}\\
        -A(x)e^{-iS(x)/h}&i\lambda
    \end{matrix}
    \right)
    \left(
    \begin{matrix}
        u_{1}\\
        u_{2}
    \end{matrix}
    \right)
\end{equation}
where $A>0$ and $S$ are real-valued functions, $h>0$ is the
semiclassical parameter, and $\lambda$ is the (complex) spectral
parameter. As is well-known, Zakharov and Shabat \cite{R:ZS72} found this system to be one half of the Lax pair for the focusing nonlinear Schr\"{o}dinger equation
$$ih\partial_{t}\psi+\tfrac{1}{2}h^{2}\partial_{x}^{2}\psi+|\psi|^{2}\psi=0,\qquad \psi(x,0)=A(x)e^{iS(x)/h}.$$
Writing $v_{1}=e^{-iS/2h}u_{1}$, $v_{2}=e^{iS/2h}u_{2}$, and writing $D_{x}=\frac{1}{i}\partial_{x}$, we put the system (\ref{E:tradZS}) into the form:
\begin{equation}\label{E:ZS}
    \left(
    \begin{matrix}
        -hD_{x}-\frac{1}{2}S'(x) &-iA(x)\\
        -iA(x)&hD_{x}-\frac{1}{2}S'(x)
    \end{matrix}
    \right)
    \left(
    \begin{matrix}
        v_{1}\\
        v_{2}
    \end{matrix}
    \right)
    =
    \lambda
    \left(
    \begin{matrix}
        v_{1}\\
        v_{2}
    \end{matrix}
    \right).
\end{equation}
We denote the operator on the left-hand side by $P$, having
principal symbol
\begin{equation*}
    p(x,\xi)=
    \left(
    \begin{matrix}
        -\xi-\frac{1}{2}S'(x) &-iA(x)\\
        -iA(x) &\xi-\frac{1}{2}S'(x)
    \end{matrix}
    \right).
\end{equation*}

\sgap

In this paper we study the pseudospectrum of $P$, the
set where the resolvent of $P$ is large. Equivalently, the pseudospectrum of $P$ may be defined as the union of the spectra of \emph{perturbations} of $P$ \cite{R:TrefEmbree}. Thus the pseudospectrum is relevant, for example, to the recent numerical experiments of Kim, Lee, and Lyng which suggest $\mathcal{O}(h^{2})$ convergence of the WKB eigenvalues to the true eigenvalues in the semiclassical limit $h\to 0$ \cite{R:KimLeeLyng}. (They restrict to the case when $S\equiv 0$ and $A$ is even, bell-shaped, and real analytic.)

\sgap

We use a standard method of microlocal analysis: we show that if a certain Poisson bracket condition is satisfied, appearing as a condition on $P$ and on the spectral parameter $\lambda$, then we can explicitly construct quasimodes, starting from a complex geometrical optics ansatz. This method was used by H\"{o}rmander \cite{R:HoWithoutSolns} and was rediscovered by Davies \cite{R:Davies}, as observed by Zworski \cite{R:ZwRemark}. Extensions of this method may be found in the papers of Dencker, Sj\"{o}strand, and Zworski \cite{R:DenckerPSsys}, \cite{R:DSZ}, which strongly influenced the work presented here.

\mgap

The main result of this paper is the following:
\begin{theorem}\label{T:mainQM}
    Let $S\in C^{\infty}(\Rbb;\Rbb)$, let
    $A\in\mathcal{S}(\Rbb;\Rbb)$ (that is, a Schwartz function), $A>0$, and let $\lambda\in\Cbb$
    be such that for some $x_{0}\in\Rbb$ we have
    \begin{equation*}
        \real\lambda=-\tfrac{1}{2}S'(x_{0}) \quad\text{and}\quad
        0<|\imag\lambda|<A(x_{0}).
    \end{equation*}
    Moreover, assume that $S^{(2k)}(x_{0})\neq 0$ is the first
    nonvanishing derivative of $S$, at $x_{0}$, of order $\geq 2$ (so $k\geq
    1$). Then there exists $h_{0}>0$, and for any $N\in\Nbb$ there
    exists $u_{N}=u_{N}(\cdot\,;h)\in C^{\infty}_{0}(\Rbb)$ with $||u_{N}||_{L^{2}}=1$ and some constant
    $C_{N}>0$ such that
    \begin{equation*}
        ||(P(x,hD_{x})-\lambda I)u_{N}||_{L^{2}}\leq
        C_{N}h^{N}\qquad\forall h\in(0,h_{0}).
    \end{equation*}
\end{theorem}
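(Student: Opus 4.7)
\emph{Outline.} The plan is to build a complex WKB / geometrical optics ansatz
$$u_N(x;h) = h^{-1/(4k)}\, \chi(x)\, e^{i\varphi(x)/h} \sum_{j=0}^{N} h^{j}\, a_j(x),$$
with $\varphi\in C^{\infty}(\Rbb;\Cbb)$ a complex phase, $a_j\in C^{\infty}(\Rbb;\Cbb^{2})$ vector-valued amplitudes, and $\chi\in C_{0}^{\infty}(\Rbb)$ a cutoff equal to $1$ on a small neighborhood of $x_0$. Inserting this into $(P-\lambda I)u_N$ and separating powers of $h$, the $h^{0}$ order forces the eikonal condition
$$\det\bigl(p(x,\varphi'(x))-\lambda I\bigr)=0, \qquad\text{i.e.}\qquad \varphi'(x)^{2}=A(x)^{2}+\bigl(\lambda+\tfrac{1}{2}S'(x)\bigr)^{2},$$
together with $a_0(x)\in\ker\bigl(p(x,\varphi'(x))-\lambda I\bigr)$; the $h^{j}$ terms, $j\geq 1$, produce a recursive family of first-order transport ODEs for the $a_j$.

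\emph{Solving the eikonal with positive $\imag\varphi$.} At $x_0$ the right-hand side of the eikonal equation equals $A(x_0)^{2}-(\imag\lambda)^{2}>0$, so I fix $\xi_0=\pm\sqrt{A(x_0)^{2}-(\imag\lambda)^{2}}$ (sign to be chosen) and integrate the corresponding branch of the square root from the initial data $\varphi'(x_0)=\xi_0$, $\varphi(x_0)=0$. Writing $\sigma(x)=\tfrac{1}{2}(S'(x)-S'(x_0))$ and $\eta=\imag\lambda$, the hypothesis that $S^{(2k)}(x_0)$ is the first nonvanishing derivative of $S$ of order $\geq 2$ yields
$$\sigma(x)=\frac{S^{(2k)}(x_0)}{2(2k-1)!}(x-x_0)^{2k-1}+O\bigl((x-x_0)^{2k}\bigr).$$
Since $\imag\bigl(\varphi'(x)^{2}\bigr)=2\sigma(x)\eta$ while $\real\bigl(\varphi'(x)^{2}\bigr)=A(x)^{2}+\sigma(x)^{2}-\eta^{2}$ is strictly positive near $x_0$, extracting the square root to first order and integrating gives
$$\imag\varphi(x)=\frac{\eta\, S^{(2k)}(x_0)}{2(2k)!\,\xi_0}(x-x_0)^{2k}+O\bigl((x-x_0)^{2k+1}\bigr).$$
Choosing the sign of $\xi_0$ so that this leading coefficient is positive secures $\imag\varphi(x)\geq c(x-x_0)^{2k}$ on a neighborhood of $x_0$, which is exactly the concentration needed for $u_N$ to be $L^{2}$-normalizable.

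\emph{Amplitudes and remainder.} Along the eikonal curve, $\ker\bigl(p(x,\varphi'(x))-\lambda I\bigr)$ is a smoothly varying one-dimensional subspace of $\Cbb^{2}$, so $a_0$ is a smooth section of that line, the scalar factor being pinned down by the first transport ODE (from the $h^{1}$ order). Higher $a_j$ are obtained recursively: at each order split the remainder into components along $\ker$ and $\operatorname{ran}$ of $p(x,\varphi')-\lambda I$, invert algebraically on the range to define $a_{j+1}$ up to a $\ker$-valued scalar, and use the next-order equation to fix that scalar via a first-order transport ODE. After $N$ steps all contributions of order $\leq h^{N}$ are cancelled inside the support of $\chi$. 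The concentration estimate gives $|e^{i\varphi/h}|\leq e^{-c(x-x_0)^{2k}/h}$, hence $\|e^{i\varphi/h}\chi\|_{L^{2}}\sim h^{1/(4k)}$, so the prefactor $h^{-1/(4k)}$ normalizes $\|u_N\|_{L^{2}}$ to $1+O(h^{\infty})$. The commutator of $P$ with $\chi$ is supported where $\chi'\neq 0$, a region on which $|e^{i\varphi/h}|$ is exponentially small in $h$; combined with the bulk remainder, pointwise $O(h^{N+1})$ on a set of measure $O(h^{1/(2k)})$, this yields $\|(P-\lambda I)u_N\|_{L^{2}}\leq C_{N}h^{N}$.

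\emph{Main obstacle.} The crux is securing the positive lower bound $\imag\varphi(x)\geq c(x-x_0)^{2k}$, which is the only source of decay and $L^{2}$-normalizability for the ansatz. This positivity is a disguised H\"ormander Poisson-bracket non-solvability condition for a scalar factor of $\det(p-\lambda I)$, and the hypothesis on $S^{(2k)}(x_0)$ is precisely what turns that abstract condition into a concrete sign that can be arranged by the choice of $\xi_0$. Handling general $k\geq 1$ uniformly (rather than only the nondegenerate Gaussian case $k=1$, where $\imag\varphi''(x_0)\neq 0$) and managing the algebraic splitting of the transport equations into kernel and range components of the $2\times 2$ symbol constitutes the principal technical burden.
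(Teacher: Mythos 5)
Your proposal is correct and follows essentially the same route as the paper: the same complex geometrical optics ansatz, the same eikonal analysis giving $\imag\varphi(x)=\frac{\pm\imag\lambda\, S^{(2k)}(x_{0})}{2\,(2k)!\,\sqrt{A(x_{0})^{2}-(\imag\lambda)^{2}}}(x-x_{0})^{2k}+\mathcal{O}((x-x_{0})^{2k+1})$ with the sign chosen to make the leading coefficient nonnegative, the same kernel/range transport hierarchy for the vector amplitudes, and the same cutoff plus $h^{1/(4k)}$ mass-versus-$h^{N+1+1/(4k)}$ remainder bookkeeping. The only cosmetic difference is that the paper normalizes by dividing by the exact norm $\|\tilde f\|_{L^{2}}$ (bounded below by $c\,h^{1/(4k)}$) rather than by the prefactor $h^{-1/(4k)}$, which normalizes only up to a bounded constant, not to $1+\mathcal{O}(h^{\infty})$.
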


\sgap

Here it is most practical to state the result in terms of
derivatives of $S$. However, as emphasized by Dencker,
Sj\"{o}strand, and Zworski \cite{R:DenckerPSsys}, \cite{R:DSZ},
the underlying general mechanisms are the repeated Poisson
brackets of the real and imaginary parts of $d(x,\xi)$, defined as
\begin{equation*}
    d(x,\xi):=\det (p(x,\xi)-\lambda I).
\end{equation*}
Indeed,
\begin{equation*}
    \begin{aligned}
    d(x,\xi)=-\xi^{2}+(\tfrac{1}{2}S'(x)+\real\lambda)^{2}-(\imag\lambda)^{2}
    +A(x)^{2}+2i(\tfrac{1}{2}S'(x)+\real\lambda)\imag\lambda
    \end{aligned}
\end{equation*}
and
\begin{equation*}
    \{\real d,\imag d\}=-2\xi S''(x)\imag\lambda.
\end{equation*}
The general formulas for higher Poisson brackets are rather messy, but
the first nonvanishing Poisson bracket takes a simple form. Let
$S^{(k+1)}$ be the first nonvanishing derivative of $S$ of order
greater than or equal to two ($k\geq 1$). Then
\begin{equation*}
    H^{k}_{\real d}\imag d:=\{\real d,\{\real d,\{\ldots,\imag d\}\}\ldots\}=(-2\xi)^{k}S^{(k+1)}(x)\imag\lambda
\end{equation*}
and all other Poisson brackets of order $\leq k$ are equal to
zero.

\sgap

In Sections \ref{S:GOansatz} through \ref{S:FinalEsts} we prove
Theorem \ref{T:mainQM}, constructing quasimodes and thus proving
blow-up of the resolvent as $h\to 0$. On the other hand, in Sections
\ref{S:UpperResolv} and \ref{S:geommeaning} we consider upper
bounds for the resolvent. We prove that the genuine spectrum is
discrete off the real line in Sections \ref{S:Sigmas} and
\ref{S:Discreteness}.

\sgap

It remains to be seen what happens when the first nonvanishing
derivative of $S$ is an odd derivative. We would expect to have a
subelliptic estimate; hopefully in the future we
can do something concrete and fairly simple for the case of the
Zakharov-Shabat operator. It may also be interesting to more
carefully study the boundary of the pseudospectrum. For both of
these issues, we would welcome further physically significant
examples from the physics community.

\sgap

\noindent\textbf{Acknowledgement:} The author thanks M. Hitrik for
suggesting the problem and for helpful conversations.

\mgap

\section{The Geometrical Optics Ansatz}\label{S:GOansatz}

We take the geometrical optics ansatz,
\begin{equation}\label{E:GOansatz}
    e^{i\vp(x)/h}
    \left(
    \begin{matrix}
        a(x;h)\\
        b(x;h)
    \end{matrix}
    \right),
\end{equation}
with
\begin{equation*}
    \begin{aligned}
    a(x;h)&=a_{0}(x)+ha_{1}(x)+h^{2}a_{2}(x)+\cdots\qquad\text{and}\\
    b(x;h)&=b_{0}(x)+hb_{1}(x)+h^{2}b_{2}(x)+\cdots,
    \end{aligned}
\end{equation*}
and we let
\begin{equation*}
    M(x):=
    \left(
    \begin{matrix}
        \vp'+\frac{1}{2}S'+\lambda &iA\\
        iA  &-\vp'+\frac{1}{2}S'+\lambda
    \end{matrix}
    \right).
\end{equation*}
Then for the ansatz (\ref{E:GOansatz}) to formally solve
(\ref{E:ZS}), we group terms in the same order of $h$ and thus want
\begin{equation}\label{E:Mzero}
    M
    \left(
    \begin{matrix}
        a_{0}\\
        b_{0}
    \end{matrix}
    \right)=
    \left(
    \begin{matrix}
        0\\
        0
    \end{matrix}
    \right),\qquad \text{and}
\end{equation}
\begin{equation}\label{E:Mrest}
    M
    \left(
    \begin{matrix}
        a_{j+1}\\
        b_{j+1}
    \end{matrix}
    \right)=
    \left(
    \begin{matrix}
        ia_{j}'\\
        -ib_{j}'
    \end{matrix}
    \right)
    \qquad \forall j\in\{0,1,2,3,\cdots\}.
\end{equation}

\mgap

\section{The Eikonal Equation}\label{S:Eikonal}

In order to have non-zero solutions to (\ref{E:Mzero}), of course
we need to have $\det M=0$; that is, we need $\vp$ to solve the
(complex) eikonal equation:
\begin{equation*}
    (\vp')^{2}=(\tfrac{1}{2}S' +\lambda)^{2}+A^{2}.
\end{equation*}
We note that the turning-point curve, defined to be the set
where $\vp'=0$, is given parametrically by
\begin{equation*}
    \lambda(x)=-\tfrac{1}{2}S'(x)\pm iA(x).
\end{equation*}
While it is possible for our $\lambda$ to lie on the turning point
curve (for a point other than $x_{0}$), we still have that $\vp'(x_{0})\neq 0$ by the hypothesis on $\lambda$.

Near the point $x_{0}$, where we choose to take $\vp(x_{0})=0$,
this has the solution
\begin{equation*}
    \vp(x)=\pm\int^{x}_{x_{0}}\sqrt{(\tfrac{1}{2}S'(t)+\lambda)^{2}+A(t)^{2}}\,
    dt.
\end{equation*}
Taking Taylor expansions and integrating, we get
\begin{equation*}
    \begin{aligned}
    \vp(x)&=\pm\left[\sqrt{(\tfrac{1}{2}S'(x_{0})+\lambda)^{2}+A(x_{0})^{2}}\right](x-x_{0})\\
    &\qquad
    \pm\frac{1}{4}\left[\frac{
    S''(x_{0})(\frac{1}{2}S'(x_{0})
    +\lambda)+2A(x_{0})A'(x_{0})}{\sqrt{(\frac{1}{2}S'(x_{0})+\lambda)^{2}+A(x_{0})^{2}}}\right]
    (x-x_{0})^{2} +\mathcal{O}((x-x_{0})^{3}).
    \end{aligned}
\end{equation*}

To prove the theorem in its full generality, we will need to
expand $\vp$ to higher orders. But this is simplified by the fact
that, in the final estimates, the important object is the
\emph{imaginary} part of the phase $\vp$. For this we have the
following lemma, where for convenience we let
$$\alpha:=(\tfrac{1}{2}S'(x_{0})+\lambda)^{2}+A(x_{0})^{2}=A(x_{0})^{2}-(\imag\lambda)^{2}\quad (>0).$$

\sgap

\begin{lemma}\label{L:imagphi}
Let $m$ be the order of the first nonvanishing derivative of $S$
at the point $x_{0}$. Then
$\imag\vp(x)=\frac{\pm\imag\lambda}{2(m!)\sqrt{\alpha}}S^{(m)}(x_{0})(x-x_{0})^{m}+\mathcal{O}((x-x_{0})^{m+1})$.
\end{lemma}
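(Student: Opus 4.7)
The plan is to expand the square root in the explicit formula for $\vp$ as a power series in $t-x_0$, isolate its imaginary part, and then integrate. The hypothesis $\real\lambda=-\tfrac{1}{2}S'(x_0)$ enables a clean separation: writing $g(t):=\tfrac{1}{2}(S'(t)-S'(x_0))$, which is real-valued and vanishes at $x_0$, one has $\tfrac{1}{2}S'(t)+\lambda=g(t)+i\,\imag\lambda$. Setting $f(t):=(\tfrac{1}{2}S'(t)+\lambda)^2+A(t)^2=\alpha+\phi(t)$, with $\alpha>0$ real and $\phi(x_0)=0$, the binomial series
$$\sqrt{f(t)}=\sqrt{\alpha}\Bigl(1+\tfrac{\phi(t)}{2\alpha}-\tfrac{\phi(t)^2}{8\alpha^2}+\cdots\Bigr)$$
converges in a neighborhood of $x_0$.

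Next I would track the order of each piece. Separating $\phi=\real\phi+i\,\imag\phi$,
$$\real\phi(t)=g(t)^2+A(t)^2-A(x_0)^2=\mathcal{O}(t-x_0),\qquad \imag\phi(t)=2g(t)\,\imag\lambda=\mathcal{O}((t-x_0)^{m-1}),$$
the latter from $g(t)=\tfrac{S^{(m)}(x_0)}{2(m-1)!}(t-x_0)^{m-1}+\mathcal{O}((t-x_0)^m)$. The key algebraic observation is that for $k\geq 2$ every monomial contributing to $\imag(\phi^k)$ contains at least one factor of $\imag\phi$; the lowest-order such term is $k(\real\phi)^{k-1}\imag\phi=\mathcal{O}((t-x_0)^{m+k-2})$, which is $\mathcal{O}((t-x_0)^m)$ because $k\geq 2$. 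Hence every quadratic and higher term of the binomial series contributes only $\mathcal{O}((t-x_0)^m)$ to the imaginary part, and
$$\imag\sqrt{f(t)}=\frac{\imag\phi(t)}{2\sqrt{\alpha}}+\mathcal{O}((t-x_0)^m)=\frac{S^{(m)}(x_0)\,\imag\lambda}{2\sqrt{\alpha}(m-1)!}(t-x_0)^{m-1}+\mathcal{O}((t-x_0)^m).$$

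The final step is to integrate from $x_0$ to $x$ and incorporate the $\pm$ from the branch of the square root; the factor $1/m$ from integration combines with $(m-1)!$ in the denominator to give $m!$, producing the stated identity.

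The only point requiring real vigilance is that $\real\phi=\mathcal{O}(t-x_0)$ (coming from the Taylor expansion of $A^2$) is much larger than $\imag\phi=\mathcal{O}((t-x_0)^{m-1})$ whenever $m\geq 3$; one might naively worry that $\phi^2/\alpha^2$ and higher-order terms in the binomial expansion would spoil the leading-order imaginary part. The observation that every imaginary contribution to $\phi^k$ must carry at least one factor of $\imag\phi$ controls this cleanly, and is really the only non-routine point in the argument.
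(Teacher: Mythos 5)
Your proof is correct and takes essentially the same route as the paper: both expand $\pm\sqrt{\alpha+\phi(t)}$ as a binomial series about $\alpha$, observe that only the linear term contributes to $\imag\sqrt{f(t)}$ at order $(t-x_{0})^{m-1}$, and then integrate to pick up the factor $1/m$ converting $(m-1)!$ into $m!$. Your explicit verification that $\imag(\phi^{k})=\mathcal{O}((t-x_{0})^{m})$ for $k\geq 2$ (since every imaginary contribution carries a factor of $\imag\phi=\mathcal{O}((t-x_{0})^{m-1})$ together with at least one further factor of order $\geq 1$) is exactly the bookkeeping the paper performs with the Taylor coefficients $T^{S}_{k}$, $T^{A}_{k}$, just spelled out more fully.
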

\begin{proof}
    Let $T^{S}_{k}$ denote the $k$th Taylor coefficient, centered
    at $x_{0}$, of $(\frac{1}{2}S'(x)+\lambda)^{2}$, and let
    $T^{A}_{k}$ denote that of $A(x)^{2}$. We then have, for $x$ sufficiently near $x_{0}$,
    \begin{equation*}
        \begin{aligned}
        &\pm\sqrt{(\tfrac{1}{2}S'(x)+\lambda)^{2}+A(x)^{2}}\\
        &\qquad=\sum_{n=0}^{\infty}
        {1/2\choose
        n}\alpha^{\frac{1}{2}-n}
        \left(\sum_{k=1}^{\infty}T^{S}_{k}(x-x_{0})^{k}+\sum_{k=1}^{\infty}T_{k}^{A}(x-x_{0})^{k}\right)^{n}.
        \end{aligned}
    \end{equation*}
    As in the statement of the theorem, let $S^{(j+1)}$ be the smallest nonvanishing derivative of $S$
    of order greater than or equal to two ($j\geq 1$). Then
    \begin{equation*}
        \begin{aligned}
        (j!)T^{S}_{j}
        &=(\tfrac{1}{2}S'(x_{0})+\lambda)S^{(j+1)}(x_{0})\\
        &=i(\imag\lambda)S^{(j+1)}(x_{0})\\
        &\neq 0\qquad \text{since }\imag\lambda\neq 0.
        \end{aligned}
    \end{equation*}
    Hence
    \begin{equation*}
        \begin{aligned}
        &\pm\imag\sqrt{(\tfrac{1}{2}S'(x)+\lambda)^{2}+A(x)^{2}}\\
        &\qquad=\imag\sum_{n=0}^{\infty}{1/2\choose n}\alpha^{\frac{1}{2}-n}
        \left(\sum_{k=j}^{\infty}T^{S}_{k}(x-x_{0})^{k}+\sum_{k=1}^{\infty}T_{k}^{A}(x-x_{0})^{k}\right)^{n}\\
        &\qquad=\imag\left[\tfrac{1}{2}\alpha^{-1/2}\,T^{S}_{j}(x-x_{0})^{j}+\mathcal{O}((x-x_{0})^{j+1})\right]\\
        &\qquad=\frac{\imag\lambda}{2\sqrt{\alpha}(j!)}S^{(j+1)}(x_{0})(x-x_{0})^{j}+\mathcal{O}((x-x_{0})^{j+1}).
        \end{aligned}
    \end{equation*}
    And so by integrating we finally get
    \begin{equation*}
        \imag\vp(x)
        =\frac{\pm\imag\lambda}{2\sqrt{\alpha}(j+1)!}S^{(j+1)}(x_{0})(x-x_{0})^{j+1}+\mathcal{O}((x-x_{0})^{j+2}),
    \end{equation*}
    which proves the lemma.
\end{proof}

\mgap

\section{The Transport Equations}\label{S:Transport}
Since we are taking $\vp$ to solve the eikonal equation, for general $x$ the image
of $M(x)$ is spanned by the eigenvector
\begin{equation*}
    \left(
    \begin{matrix}
    \vp'+\frac{1}{2}S'+\lambda\\
    iA
    \end{matrix}
    \right)
\end{equation*}
having eigenvalue $S'+2\lambda$. Since this eigenvalue is not
zero and since $\vp'+\frac{1}{2}S'+\lambda\neq 0$, we can diagonalize $M$ as follows:
\begin{equation*}
    M=R
    \left(
    \begin{matrix}
    S'+2\lambda &0\\
    0 &0
    \end{matrix}
    \right)
    R^{-1}
\end{equation*}
where
\begin{equation*}
    R=
    \left(
    \begin{matrix}
    \vp'+\frac{1}{2}S'+\lambda &-iA\\
    iA &\vp'+\frac{1}{2}S'+\lambda
    \end{matrix}
    \right)
\end{equation*}
and hence
\begin{equation*}
    R^{-1}=(S' +2\lambda)^{-1}
    \left(
    \begin{matrix}
    1 &\frac{iA}{\vp'+\frac{1}{2}S'+\lambda}\\
    \frac{-iA}{\vp'+\frac{1}{2}S'+\lambda} &1
    \end{matrix}
    \right).
\end{equation*}
Then, writing (\ref{E:Mzero}) and (\ref{E:Mrest}) in terms of this
diagonalization, we want $a$ and $b$ to satisfy
\begin{equation}\label{E:ablink}
    a_{j}+\left(\frac{iA}{\vp'+\frac{1}{2}S'+\lambda}\right)b_{j}=
    \begin{cases}
    0 \qquad &\text{if }j=0\\
    \frac{ia_{j-1}'}{S'+2\lambda}+\frac{Ab_{j-1}'}{(S'+2\lambda)(\vp'+\frac{1}{2}S'+\lambda)}\qquad
    &\text{if }j\geq 1.
    \end{cases}
\end{equation}
and
\begin{equation}\label{E:abderiv}
    a_{j}'=\frac{i(\vp'+\frac{1}{2}S'+\lambda)b_{j}'}{A}
    \qquad\qquad \forall j\in\{0,1,2,\ldots\}.
\end{equation}

\sgap

We will now construct $a_{0}$ and $b_{0}$ in detail. First of all,
we want
\begin{equation*}
    \left(
    \begin{matrix}
        a_{0}\\
        b_{0}
    \end{matrix}
    \right)\in
    \text{Ker(M)}=\text{Span}\left\{
    \left(
    \begin{matrix}
        -iA\\
        \vp'+\frac{1}{2}S'+\lambda
    \end{matrix}
    \right)
    \right\}.
\end{equation*}
And secondly, we want $\left(
\begin{smallmatrix}
ia_{0}'\\
-ib_{0}'
\end{smallmatrix}
\right)$ to be in the image of $M$. Therefore, we want both
\begin{equation*}
    \left(
    \begin{matrix}
        a_{0}\\
        b_{0}
    \end{matrix}
    \right)
    =
    \alpha(x)
    \left(
    \begin{matrix}
        -iA\\
        \vp'+\frac{1}{2}S'+\lambda
    \end{matrix}
    \right)
\end{equation*}
and
\begin{equation*}
    \left(
    \begin{matrix}
        ia_{0}'\\
        -ib_{0}'
    \end{matrix}
    \right)
    =
    \beta(x)
    \left(
    \begin{matrix}
        \vp'+\frac{1}{2}S'+\lambda\\
        iA
    \end{matrix}
    \right),
\end{equation*}
where the coefficients $\alpha$ and $\beta$ are to be determined.
But by a direct calculation, this is possible when
\begin{equation*}
    \alpha(x)=\left[(\vp'(x)+\tfrac{1}{2}S'(x)+\lambda)^{2}+A(x)^{2}\right]^{-1/2},
\end{equation*}
which gives us $a_{0}$ and $b_{0}$.

\sgap

To solve for the remaining amplitudes, for $j\geq 1$ in
(\ref{E:ablink}) and (\ref{E:abderiv}), we let
\begin{equation*}
    \gamma:=\frac{iA}{\vp'+\frac{1}{2}S'+\lambda}
\end{equation*}
and
\begin{equation*}
    c_{j-1}:=\frac{ia_{j-1}'}{S'+2\lambda}+\frac{Ab_{j-1}'}{(S'+2\lambda)(\vp'+\frac{1}{2}S'+\lambda)}.
\end{equation*}
(Note that $\gamma-\frac{1}{\gamma}=\frac{2i\vp'}{A}\neq 0$ for $x$ near $x_{0}$.) Then we
are to solve the system
\begin{equation*}
    \begin{cases}
    a_{j}+\gamma b_{j}=c_{j-1}\\
    a_{j}'+\frac{1}{\gamma}b_{j}'=0.
    \end{cases}
\end{equation*}
But this is easily accomplished.

\mgap

\section{The Final Estimates}\label{S:FinalEsts}

It is now time to complete the quasimode construction by
estimating the error generated by taking only finitely many terms
in (\ref{E:GOansatz}), hence making rigorous the asymptotic
series.

We take only finitely many amplitude terms:
\begin{equation}\label{E:finiteamp}
    \begin{aligned}
    a(x;h)&=a_{0}(x)+ha_{1}(x)+h^{2}a_{2}(x)+\cdots+h^{N}a_{N}(x)\qquad\text{and}\\
    b(x;h)&=b_{0}(x)+hb_{1}(x)+h^{2}b_{2}(x)+\cdots+h^{N}b_{N}(x).
    \end{aligned}
\end{equation}
Then
\begin{equation*}
    \begin{aligned}
    (P-\lambda I)
    \left(
    \begin{matrix}
    e^{i\vp/h}a\\
    e^{i\vp/h}b
    \end{matrix}
    \right)
    &=e^{i\vp/h}\left[ih^{N+1}
    \left(
    \begin{matrix}
    a_{N}'\\
    -b_{N}'
    \end{matrix}
    \right)
    +\sum_{k=0}^{N-1}h^{k+1}\left(i
    \left(
    \begin{matrix}
    a_{k}'\\
    -b_{k}'
    \end{matrix}
    \right)
    -M
    \left(
    \begin{matrix}
    a_{k+1}\\
    b_{k+1}
    \end{matrix}
    \right)
    \right)
    -M
    \left(
    \begin{matrix}
    a_{0}\\
    b_{0}
    \end{matrix}
    \right)
    \right]\\
    &=ih^{N+1}e^{i\vp/h}
    \left(
    \begin{matrix}
    a_{N}'\\
    -b_{N}'
    \end{matrix}
    \right)
    \end{aligned}
\end{equation*}
where we have solved the eikonal and transport equations as above.

\sgap

We now assume that $S^{(2k)}(x_{0})\neq 0$ is the first
nonvanishing derivative of $S$ of order greater than or equal to
two ($k\geq 1$). Then, using Lemma \ref{L:imagphi} with $m=2k$, we
choose the sign of $\vp$ such that the leading term
\begin{equation*}
    \frac{\pm\imag\lambda}{2(2k)!\sqrt{\alpha}}S^{(2k)}(x_{0})(x-x_{0})^{2k}
\end{equation*}
is a nonnegative quantity. Then there exists some $\gamma>0$ such
that, for $x$ sufficiently close to $x_{0}$,
\begin{equation*}
    \gamma(x-x_{0})^{2k}\leq \imag\vp(x)\leq 3\gamma(x-x_{0})^{2k}.
\end{equation*}

\sgap

To conclude the quasimode construction, we let $\chi\in
C^{\infty}_{0}(\Rbb)$ be $=1$ for $|x-x_{0}|<\frac{1}{2}\delta$
and $=0$ for $|x-x_{0}|>\delta$, where $\delta>0$ is to be
determined. Then we set
\begin{equation*}
    f(x)=e^{i\vp/h}
    \left(
    \begin{matrix}
    a\\
    b
    \end{matrix}
    \right)
\end{equation*}
with $a$ and $b$ as in (\ref{E:finiteamp}). And we let
\begin{equation*}
    \tilde{f}(x)=\chi(x)f(x).
\end{equation*}
Then
\begin{equation*}
    ||(P-\lambda I)\tilde{f}||_{2}\leq \left|\left|
    \left(
    \begin{matrix}
    ih\chi' & 0\\
    0& -ih\chi'
    \end{matrix}
    \right)
    f
    \right|\right|_{2}
    +||\chi (P-\lambda I)f||_{2}.
\end{equation*}
As already noted,
\begin{equation*}
    ||\chi (P-\lambda I)f||_{2}=h^{N+1}
    \left|\left|\chi e^{i\vp/h}
    \left(
    \begin{matrix}
    a_{N}'\\
    -b_{N}'
    \end{matrix}
    \right)\right|\right|_{2},
\end{equation*}
and then we compute
\begin{equation*}
    \begin{aligned}
    \left|\left|\chi e^{i\vp/h}
    \left(
    \begin{matrix}
    a_{N}'\\
    -b_{N}'
    \end{matrix}
    \right)\right|\right|^{2}_{2}
    &=\int|\chi|^{2}e^{-2\imag\vp/h}[|a_{N}'|^{2}+|b_{N}'|^{2}]\,
    dx\\
    &\leq C\int_{|x-x_{0}|\leq
    \delta}e^{-2\imag\vp/h}\,dx\\
    &\leq C\int_{|x-x_{0}|\leq
    \delta}e^{-2\gamma(x-x_{0})^{2k}/h}\,dx\\
    &\leq C_{N}h^{1/2k}.
    \end{aligned}
\end{equation*}
We also have
\begin{equation*}
    \begin{aligned}
    \left|\left|
    \left(
    \begin{matrix}
    ih\chi' & 0\\
    0& -ih\chi'
    \end{matrix}
    \right)
    f
    \right|\right|^{2}_{2}
    &=\int|h\chi'|^{2}e^{-2\imag\vp/h}[|a|^{2}+|b|^{2}]\,
    dx\\
    &\leq
    ch^{2}\int_{\frac{\delta}{2}<|x-x_{0}|<\delta}e^{-2\imag\vp/h}\,
    dx\\
    &\leq
    ch^{2}\int_{\frac{\delta}{2}<|x-x_{0}|<\delta}e^{-2\gamma(x-x_{0})^{2k}/h}\,
    dx\\
    &\leq C'_{N}e^{-\alpha/h}h^{2}
    \end{aligned}
\end{equation*}
for some $\alpha>0$. Hence
$$||(P-\lambda I)\tilde{f}||_{2}\leq C_{N}h^{N+1+\frac{1}{4k}}\quad\forall 0<h<1,$$
where $C_{N}$ is independent of $h$.

\sgap

The last step is to bound $\tilde{f}$ from below:
\begin{equation*}
    \begin{aligned}
    ||\tilde{f}||^{2}_{2}&=\int\chi^{2}|f|^{2}\, dx\\
    &\geq \int_{|x-x_{0}|\leq\frac{\delta}{2}}|f|^{2}\, dx\\
    &=\int_{|x-x_{0}|\leq\frac{\delta}{2}}e^{-2\imag\vp/h}[|a|^{2}+|b|^{2}]\,dx\\
    &\geq c\int_{|x-x_{0}|\leq\frac{\delta}{2}}
    e^{-6\gamma(x-x_{0})^{2k}/h}\, dx\\
    &\geq c_{0}h^{1/2k},
    \end{aligned}
\end{equation*}
where we have used the fact that we have non-zero solutions to the
transport equations.

\sgap

We can now take $$u_{N}:=\tilde{f}/||\tilde{f}||_{2}$$ to conclude the proof of the theorem.

\mgap

\section{Upper Bounds for the Resolvent}\label{S:UpperResolv}

To obtain upper bounds for the resolvent, we will use the
semiclassical pseudodifferential calculus. In this and the
following sections, we will restrict ourselves to $S\in
C^{\infty}(\Rbb;\Rbb)$ such that $S'\in
C^{\infty}_{b}(\Rbb;\Rbb)$, where
\begin{equation*}
    C^{\infty}_{b}:=\{f\in C^{\infty};\, \partial^{\alpha}\! f\in L^{\infty} \, \forall
    \alpha\}.
\end{equation*}
And we will take $A\in\mathcal{S}(\Rbb;\Rbb)$, $A>0$, as before.

In studying our matrix-valued symbols, we might as well use the
norm
$$||B||=\max_{i,j}|b_{ij}|, \qquad\text{where }B=(b_{ij})_{1\leq i,j\leq
n}.$$ Then for our symbol
\begin{equation*}
    p(x,\xi)=
    \left(
    \begin{matrix}
        -\xi-\frac{1}{2}S'(x) &-iA(x)\\
        -iA(x) &\xi-\frac{1}{2}S'(x)
    \end{matrix}
    \right)
\end{equation*}
we have
\begin{equation*}
    \begin{aligned}
    &||p(x,\xi)-\lambda I||\leq C(1+|\xi|),\\
    &||\partial^{\alpha}_{x}p(x,\xi)||\leq C_{\alpha} &&\text{for
    }\alpha\geq 1,\\
    &||\partial_{\xi}p(x,\xi)||=1, &&\text{and}\\
    &||\partial^{\alpha}_{x}\partial^{\beta}_{\xi}p(x,\xi)||=0
    &&\text{for }\beta\geq 1 \text{ and }\alpha+\beta\geq 2.
    \end{aligned}
\end{equation*}
So, in the terminology of Zworski \cite{R:ZworskiSCA}, with the admissible weight function $m(x,\xi)=1+|\xi|$ we
have $p-\lambda I\in S(m)$.

\sgap

With $d(x,\xi)=\det(p(x,\xi)-\lambda I)$ as in the introduction, we now prove an ellipticity result:

\sgap

\begin{lemma}
    Suppose that $A\in\mathcal{S}(\Rbb;\Rbb)$, $A>0$, and that $S$
    is such that $S'\in C^{\infty}_{b}(\Rbb;\Rbb)$.
    If $\lambda\in\Cbb$ is such that $|d(x,\xi)|\geq \epsilon$ for all
    $(x,\xi)\in \Rbb^{2}$,
    for some fixed $\epsilon>0$, then we have
    $$||(P-\lambda I)^{-1}u||_{L^{2}}\leq C(\epsilon,\lambda)||u||_{L^{2}}.$$
\end{lemma}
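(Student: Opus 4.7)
The plan is to construct a parametrix for $P-\lambda I$ directly from the pointwise inverse of its symbol and then use the semiclassical symbol calculus (for systems, as in Zworski \cite{R:ZworskiSCA}) to invert it exactly for small $h$. The natural candidate parametrix has symbol
\begin{equation*}
    (p(x,\xi)-\lambda I)^{-1}
    =\frac{1}{d(x,\xi)}\,\mathrm{adj}\bigl(p(x,\xi)-\lambda I\bigr),
\end{equation*}
which is well-defined under the standing hypothesis that $|d|\geq\epsilon$ everywhere.

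The key symbolic step is to upgrade the global lower bound $|d|\geq\epsilon$ to the weighted estimate $|d(x,\xi)|\geq c\,m(x,\xi)^{2}$ for some $c=c(\epsilon,\lambda,A,S)>0$. This is easy: from the explicit formula for $d$ given in the introduction, $d(x,\xi)=-\xi^{2}+O_{A,S,\lambda}(1)$, so $|d|\geq\tfrac{1}{2}\xi^{2}$ for $|\xi|$ larger than some $K=K(A,S,\lambda)$, while on the compact region $|\xi|\leq K$ the weight $m$ is bounded and the hypothesis $|d|\geq\epsilon$ kicks in. Combining the entries of $p-\lambda I$, which lie in $S(m)$, with the bound $1/d\in S(m^{-2})$ (routine: each derivative $\partial^{\gamma}(1/d)$ is a finite sum of terms $\partial^{\gamma_{1}}d\cdots\partial^{\gamma_{k}}d/d^{k+1}$, each bounded by $m^{2k}/m^{2(k+1)}=m^{-2}$), we conclude that the matrix-valued symbol $(p-\lambda I)^{-1}$ belongs to $S(m^{-1})$, with seminorms controlled by $\epsilon$ and $\lambda$.

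Now set $Q:=\mathrm{Op}^{w}_{h}\!\bigl((p-\lambda I)^{-1}\bigr)$. By the Moyal product formula for matrix-valued symbols,
\begin{equation*}
    Q(P-\lambda I)=\mathrm{Op}^{w}_{h}\!\Bigl((p-\lambda I)^{-1}\#(p-\lambda I)\Bigr)
    = I + h\,R_{1},
\end{equation*}
where $R_{1}=\mathrm{Op}^{w}_{h}(r_{1})$ with $r_{1}\in S(m^{-1}\cdot m)=S(1)$ (recall also the vanishing of $\partial^{\alpha}_{x}\partial^{\beta}_{\xi}p$ for $\beta\geq 1$, $\alpha+\beta\geq 2$, which truncates the expansion). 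By Calder\'{o}n--Vaillancourt, $R_{1}$ is bounded on $L^{2}$ with norm uniform in $h$, and similarly for $R_{2}$ defined by $(P-\lambda I)Q=I+hR_{2}$. Choosing $h_{0}=h_{0}(\epsilon,\lambda)$ so small that $h\|R_{j}\|<\tfrac{1}{2}$ for $h\in(0,h_{0})$, we invert by Neumann series and obtain
\begin{equation*}
    (P-\lambda I)^{-1}=(I+hR_{1})^{-1}Q,
\qquad \|(P-\lambda I)^{-1}\|_{L^{2}\to L^{2}}\leq 2\|Q\|\leq C(\epsilon,\lambda).
\end{equation*}

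The main (and essentially only) obstacle is the bookkeeping of the matrix-valued symbol calculus: one must check that the composition formula and the $L^{2}$-boundedness theorem apply to $2\times 2$ symbols in $S(m)$ and $S(m^{-1})$ with the weight $m=1+|\xi|$, and that the seminorms controlling the error $r_{1}$ depend only on finitely many seminorms of $p-\lambda I$ in $S(m)$ and of $(p-\lambda I)^{-1}$ in $S(m^{-1})$. Once this is granted, the argument is the standard elliptic parametrix construction adapted to systems.
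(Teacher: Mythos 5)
Your proposal is correct and follows essentially the same route as the paper: both establish the ellipticity $(p-\lambda I)^{-1}\in S(1/m)$ by splitting into $|\xi|\leq K$ (where the hypothesis $|d|\geq\epsilon$ is used) and $|\xi|\geq K$ (where $|d|\geq\tfrac{1}{2}\xi^{2}$), and then invoke the semiclassical elliptic calculus. The only difference is presentational: the paper delegates the parametrix construction and Neumann-series inversion to Theorem~4.23 of \cite{R:ZworskiSCA}, whereas you write out that step (and the implicit smallness condition on $h$) explicitly.
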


\begin{proof}
The hypothesis says precisely that
\begin{equation}\label{E:detpgeqeps}
    \begin{aligned}
    d(x,\xi)&\equiv|\det (p(x,\xi)-\lambda I)|\\
    &=|-\xi^{2}+(\tfrac{1}{2}S'(x)+\lambda)^{2}+A(x)^{2}|\\
    &\geq \epsilon \qquad \forall (x,\xi), \qquad \text{for some $\epsilon>0$}
    \end{aligned}
\end{equation}
(which requires $\imag \lambda\neq 0$; also see Section
\ref{S:geommeaning}).

We first demonstrate the ellipticity of the symbol
\begin{equation*}
    (p(x,\xi)-\lambda I)^{-1}=\frac{1}{-\xi^{2}+(\frac{1}{2}S'(x)+\lambda)^{2}+A(x)^{2}}
    \left(
    \begin{matrix}
        \xi-\frac{1}{2}S'(x)-\lambda &iA(x)\\
        iA(x) &-\xi-\frac{1}{2}S'(x)-\lambda
    \end{matrix}
    \right).
\end{equation*}
That is, first we show that
\begin{equation*}
    ||(p(x,\xi)-\lambda I)^{-1}||\leq C(\epsilon,\lambda)(1+|\xi|)^{-1}\qquad
    \forall (x,\xi).
\end{equation*}
For this we let $K>>1$, its precise value to be determined. In fact, we immediately
take $K$ such that $|\lambda|\leq \frac{1}{2}K$. If $|\xi|\leq K$,
then clearly $$||(p(x,\xi)-\lambda I)^{-1}||\leq
\frac{C(K)}{\epsilon}.$$ 
On the other hand, if $|\xi|\geq K$, then
\begin{equation*}
    \begin{aligned}
    |d(x,\xi)|
    &\geq \xi^{2}-|\lambda|^{2}-|S'(x)||\lambda|-(\tfrac{1}{2}S'(x))^{2}-A(x)^{2}\\
    &\geq \tfrac{1}{2}\xi^{2}+\tfrac{1}{2}K^{2}-\tfrac{1}{4}K^{2}-\tfrac{1}{2}|S'(x)|K
    -(\tfrac{1}{2}S'(x))^{2}-A(x)^{2}\\
    &\geq \tfrac{1}{2}\xi^{2} \qquad\text{when $K$ is sufficiently large.}
    \end{aligned}
\end{equation*}
Hence $$||(p(x,\xi)-\lambda I)^{-1}||\leq \frac{C}{|\xi|}.$$

\sgap

Moreover, it is now easy to see that $(p(x,\xi)-\lambda I)^{-1}\in
S(\frac{1}{m})$. Hence, using the pseudodifferential calculus (see Theorem~4.23 of \cite{R:ZworskiSCA}),
$$(P-\lambda I)^{-1}: \quad L^{2}(\Rbb;\Cbb^{2})\rightarrow
L^{2}(\Rbb;\Cbb^{2})$$ is a bounded operator; that is,
$$||(P-\lambda I)^{-1}u||_{L^{2}}\leq C(\epsilon,\lambda)||u||_{L^{2}}.$$

\end{proof}

\mgap

\section{The Geometric Meaning of $|\det (p(x,\xi)-\lambda I)|\geq
\epsilon$}\label{S:geommeaning}

In this section we give a simple geometric meaning to
(\ref{E:detpgeqeps}), as seen in the $\lambda$-plane. That is, we fix $\lambda\in\Cbb$ and suppose that
\begin{equation}\label{E:LBfordet}
    \begin{aligned}
    |\det
    (p(x,\xi)-\lambda I)|^{2}&=\left(-\xi^{2}+(\tfrac{1}{2}S'(x)+\real\lambda)^{2}-(\imag\lambda)^{2}+A(x)^{2}\right)^{2}\\
    &\qquad\qquad+4(\imag\lambda)^{2}(\tfrac{1}{2}S'(x)+\real\lambda)^{2}\\
    &\geq \epsilon^{2} \qquad \forall (x,\xi), \qquad \text{for some $\epsilon>0$.}
    \end{aligned}
\end{equation}
Clearly for this to be true we need $\imag\lambda\neq 0$.

\sgap

\begin{lemma}\label{L:meaning}
    Suppose that $A\in\mathcal{S}(\Rbb;\Rbb)$, $A>0$, and that $S$
    is such that $S'\in C^{\infty}_{b}(\Rbb;\Rbb)$. Then failure of condition (\ref{E:LBfordet}) is equivalent to:
    \begin{equation*}
    	\begin{cases}
		\imag\lambda=0 \\
		\text{OR}\\
		\imag\lambda\neq 0\quad\text{and}\quad \exists x_{0}\in\Rbb\,\,\text{such that}\,\,\real\lambda=-\tfrac{1}{2}S'(x_{0}) \,\,\text{and}\,\,|\imag\lambda|\leq A(x_{0}).
	\end{cases}
    \end{equation*}
\end{lemma}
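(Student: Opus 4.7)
The plan is to view the two summands in the expression for $|\det(p(x,\xi)-\lambda I)|^2$ as the squares of its real and imaginary parts, since only one of them depends on $\xi$. Write $|\det(p-\lambda I)|^2 = P(x,\xi)^2 + Q(x)^2$, where $P(x,\xi) = -\xi^2 + (\tfrac{1}{2}S'(x)+\real\lambda)^2 - (\imag\lambda)^2 + A(x)^2$ and $Q(x) = 2\imag\lambda \cdot (\tfrac{1}{2}S'(x)+\real\lambda)$. Since $Q$ depends only on $x$, I first minimize over $\xi$ for each fixed $x$. Setting $\xi^2 \geq 0$ as a free non-negative parameter, one sees that the infimum over $\xi$ of $P(x,\xi)^2$ is $0$ if $A(x)^2+(\tfrac{1}{2}S'(x)+\real\lambda)^2\geq (\imag\lambda)^2$ (attained at $\xi^2$ equal to that quantity), and otherwise it equals $\bigl((\imag\lambda)^2-A(x)^2-(\tfrac{1}{2}S'(x)+\real\lambda)^2\bigr)^2$ (attained at $\xi=0$). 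In short,
\begin{equation*}
\inf_{\xi\in\Rbb}|\det(p(x,\xi)-\lambda I)|^2 = Q(x)^2 + \bigl[\max(0,(\imag\lambda)^2-A(x)^2-(\tfrac{1}{2}S'(x)+\real\lambda)^2)\bigr]^2.
\end{equation*}

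Failure of (\ref{E:LBfordet}) is then equivalent to $\inf_{x}$ of the right-hand side being $0$. If $\imag\lambda=0$, then $Q\equiv 0$ and the max-term is $\max(0,-A(x)^2-(\tfrac12 S'(x)+\real\lambda)^2)=0$ everywhere, so the infimum is automatically $0$; this handles the first case of the lemma. If $\imag\lambda\neq 0$, I would show that the infimum is $0$ if and only if it is attained at some $x_0$, and moreover precisely when both summands vanish at $x_0$, which gives exactly $\real\lambda=-\tfrac12 S'(x_0)$ together with $|\imag\lambda|\leq A(x_0)$.

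The one place that needs real work is ruling out a non-attaining infimum when $\imag\lambda\neq 0$, so let me explain this. Suppose a sequence $x_n$ drives the right-hand side to zero. Then $Q(x_n)\to 0$ forces $\tfrac12 S'(x_n)+\real\lambda\to 0$, and the max-term going to zero then forces $A(x_n)^2\geq (\imag\lambda)^2 - o(1)$. If $\{x_n\}$ has a bounded subsequence, continuity of $S'$ and $A$ yields a limit point $x_0$ satisfying $\real\lambda=-\tfrac12 S'(x_0)$ and $A(x_0)\geq |\imag\lambda|$, so the infimum is in fact attained. The remaining possibility is $|x_n|\to\infty$; but then $A\in\mathcal{S}(\Rbb;\Rbb)$ gives $A(x_n)\to 0$, contradicting $A(x_n)\geq |\imag\lambda|-o(1)>0$. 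Hence the two conditions displayed in the lemma exhaust all ways in which (\ref{E:LBfordet}) can fail, and are clearly sufficient (the $\imag\lambda=0$ case directly, and the second case by producing an explicit $(x_0,\xi_0)$ with $\xi_0^2=A(x_0)^2-(\imag\lambda)^2\geq 0$ at which the determinant vanishes).
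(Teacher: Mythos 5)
Your proof is correct and follows essentially the same route as the paper: both arguments split $|\det(p-\lambda I)|^{2}$ into the two squared terms, handle the $\xi$-dependence via $\xi^{2}=A(x)^{2}-(\imag\lambda)^{2}$, and use the Schwartz decay of $A$ together with Bolzano--Weierstrass to rule out the infimum escaping to infinity when $\imag\lambda\neq 0$, with the converse given by the same explicit choice of $\xi_{0}$. The only difference is cosmetic: you minimize over $\xi$ first and then work with a function of $x$ alone, whereas the paper carries the joint sequence $(x_{n},\xi_{n})$.
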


\sgap

\begin{proof}

If condition (\ref{E:LBfordet}) fails, then either (i) $\imag \lambda=0$ or (ii) $\imag\lambda\neq 0$ and there exists a sequence $(x_{n},\xi_{n})\in\Rbb^{2}$ such that
\begin{equation*}
	\begin{cases}
		-\tfrac{1}{2}S'(x_{n})\to\real\lambda &\text{and}\\
		A(x_{n})^{2}-\xi_{n}^{2}\to(\imag\lambda)^{2}.
	\end{cases}
\end{equation*}
We note that this is impossible if the sequence $(x_{n})_{n=1}^{\infty}$ is unbounded, since $A$ decays to $0$ at infinity. Thus by the Bolzano-Weierstrass theorem we may assume that $\lim x_{n}=x_{0}\in\Rbb$. We thus have a point $x_{0}\in\Rbb$ and a sequence $(\xi_{n})_{n=1}^{\infty}$ such that
\begin{equation*}
	\begin{cases}
		-\tfrac{1}{2}S'(x_{0})=\real\lambda &\text{and}\\
		A(x_{0})^{2}-\xi_{n}^{2}\to(\imag\lambda)^{2}.
	\end{cases}
\end{equation*}
That is,
$$0\leq \lim \xi_{n}^{2}=A(x_{0})^{2}-(\imag\lambda)^{2},$$
so we have a point $x_{0}\in\Rbb$ such that 
\begin{equation*}
	\begin{cases}
		-\tfrac{1}{2}S'(x_{0})=\real\lambda &\text{and}\\
		A(x_{0})^{2}\geq (\imag\lambda)^{2}.
	\end{cases}
\end{equation*}

To prove the other direction, if $\imag\lambda=0$ then clearly condition (\ref{E:LBfordet}) fails, so we assume that there exists some $x_{0}\in\Rbb$ such that $\real\lambda=-\frac{1}{2}S'(x_{0})$ and $|\imag\lambda|\leq A(x_{0})$. Thus
\begin{equation*}
	|\det(p(x_{0},\xi)-\lambda I)|^{2}
	=[-\xi^{2}+(A(x_{0})^{2}-(\imag\lambda)^{2})]^{2},
\end{equation*}
and we choose $\xi\in\Rbb$ such that $\xi^{2}=A(x_{0})^{2}-(\imag\lambda)^{2}$ to see that condition (\ref{E:LBfordet}) fails.

\end{proof}

\sgap

\noindent{\textbf{Example.}} One special case of considerable interest occurs when
$$A(x)=S(x)=\sech(2x).$$ Numerical studies of the
eigenvalues of the Zakharov-Shabat system in this case can be
found in the works of Bronski and Miller \cite{R:Bronski96},
\cite{R:Bronski01}, \cite{R:Miller}. For this example we have the curve
$$\gamma(x)=(-\tfrac{1}{2}S'(x),A(x))=(\tanh(2x)\sech(2x),\sech(2x))=:(\xi,\eta),$$
which is a ``vertical'' lemniscate of Gerono (a.k.a. lemniscate of Huygens), satisfying
$\xi^{2}=\eta^{2}(1-\eta^{2})$, as seen in Figure~\ref{F:Gerono}. The set of $\lambda$ such that condition (\ref{E:LBfordet}) fails is precisely (the convex hull of the lemniscate)$\cup\Rbb$.

\begin{figure}
\begin{center}
\epsfig{file=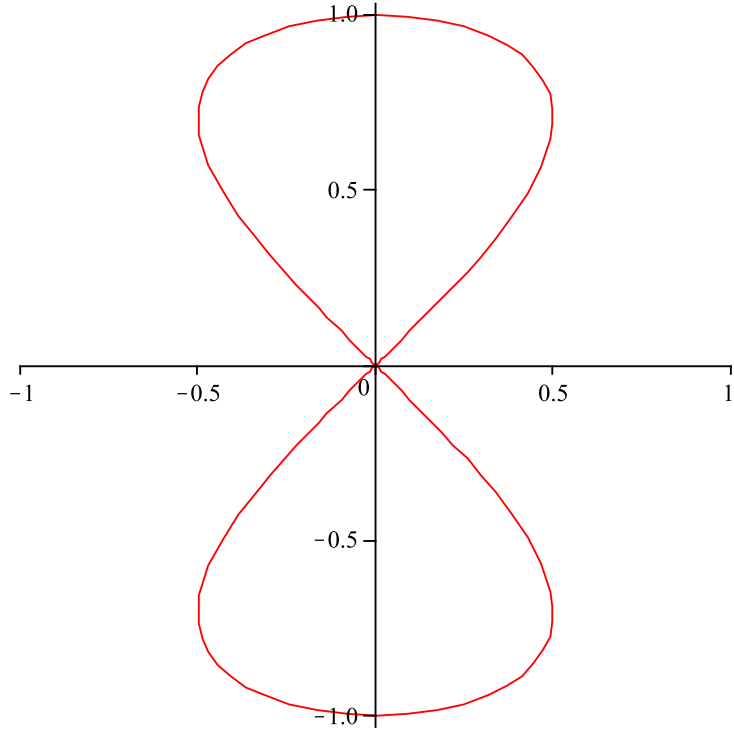,height=10cm}
\end{center}
\caption{The Lemniscate of Gerono.}
\label{F:Gerono}
\end{figure}

\sgap

\noindent\textbf{Example.} Another interesting special case occurs when
$$A(x)=-\sech x \quad\text{and}\quad S'(x)=-\mu\tanh x,$$
where $\mu$ is a real parameter. The semiclassical limit of the
Zakharov-Shabat eigenvalue problem in this case was studied by
Tovbis and Venakides, who found an explicit solution
\cite{R:TovbisVenakides}. Here the turning point curve is
simply the ellipse given in the $(\xi,\eta)$-plane by
$$\left(\frac{2\xi}{\mu}\right)^{2}+\eta^{2}=1.$$

\mgap

\section{$\Sigma(p)$ and $\Sigma_{\infty}(p)$}\label{S:Sigmas}

In the next sections we use the methods of Dencker, Sj\"{o}strand,
and Zworski \cite{R:DenckerPSsys}, \cite{R:DSZ}, to prove the
discreteness of the spectrum off the real axis. We begin with two central definitions from \cite{R:DenckerPSsys}:

\begin{definition}
    Let $p\in C^{\infty}(T^{*}\Rbb^{n},\mathcal{L}(\Cbb^{N},\Cbb^{N}))$. We
    denote the closure of the set of eigenvalues of $p$ by:
    \begin{equation*}
        \Sigma(p)=\overline{\{\lambda\in\Cbb;\, \exists w\in T^{*}\Rbb^{n},\,|p(w)-\lambda I|=0\}}
    \end{equation*}
    (with the notation $|T|=\det T$) and the eigenvalues at infinity by:
    \begin{equation*}
        \Sigma_{\infty}(p)=\{\lambda\in\Cbb;\, \exists w_{j}\rightarrow\infty,\, \exists u_{j}\in\Cbb^{N}\backslash 0\,\,\text{such that}\,\,
        |p(w_{j})u_{j}-\lambda u_{j}|\slash |u_{j}|\rightarrow 0,\, j\rightarrow
        \infty\},
    \end{equation*}
    which is closed in $\Cbb$.
\end{definition}

\sgap

The statement that $\lambda(x,\xi)$ is an eigenvalue of the matrix
\begin{equation*}
    p(x,\xi)=
    \left(
    \begin{matrix}
        -\xi-\frac{1}{2}S'(x) &-iA(x)\\
        -iA(x)&\xi-\frac{1}{2}S'(x)
    \end{matrix}
    \right)
\end{equation*}
is equivalent to the statement that
\begin{equation*}
    \begin{cases}
    \imag\lambda=0 &\text{and}\quad
    \xi^{2}=(\frac{1}{2}S'(x)+\real\lambda)^{2}+A(x)^{2}\\
    \text{OR}\\
    \frac{1}{2}S'(x)+\real\lambda=0 &\text{and}\quad
    \xi^{2}=A(x)^{2}-(\imag\lambda)^{2}
    \end{cases}
\end{equation*}
Hence $\Sigma(p)$ is precisely the set
\begin{equation*}
    \{\lambda\in\Cbb;\, \imag\lambda=0\}\cup
    \{\lambda\in\Cbb;\,\exists x\in\Rbb\text{ s.t. }\real\lambda=-\tfrac{1}{2}S'(x)\text{ and }|\imag\lambda|\leq
    A(x)\}.
\end{equation*}
That this set is closed follows from the same argument as in the proof of Lemma~\ref{L:meaning}. Moreover, we see that $\Sigma(p)$ is precisely the set for which condition~(\ref{E:LBfordet}) fails. 

\sgap

We now turn to $\Sigma_{\infty}(p)$ and prove that
$\Sigma_{\infty}(p
)\subset\Rbb$. Let $\lambda\in \Cbb$ be such that
$\imag\lambda\neq 0$. We will show that $\lambda\notin\Sigma_{\infty}(p)$. In the following calculations, we use the
expression of $p(x,\xi)-\lambda I$ as a sum of a selfadjoint
matrix and an anti-selfadjoint matrix:
$$p(x,\xi)-\lambda I=X+Y$$
where
\begin{equation*}
    X=
    \left(
    \begin{matrix}
        -\xi-\frac{1}{2}S'(x)-\real\lambda &0\\
        0&\xi-\frac{1}{2}S'(x)-\real\lambda
    \end{matrix}
    \right),
\end{equation*}
\begin{equation*}
    Y=
    -i\left(
    \begin{matrix}
        \imag\lambda &A(x)\\
        A(x)&\imag\lambda
    \end{matrix}
    \right),
\end{equation*}
and where the commutator is
\begin{equation*}
    [X,Y]
    =XY+\overline{(XY)^{T}}
    =
    2i\xi A(x)
    \left(
    \begin{matrix}
        0 &1\\
        -1&0
    \end{matrix}
    \right).
\end{equation*}
Correspondingly, if we write $\vec{u}=\left(
\begin{matrix}u_{1}\\
u_{2}
\end{matrix}\right)\in\Cbb^{2}$, then
\begin{equation*}
    |X\vec{u}|^{2}=(\xi+\tfrac{1}{2}S'(x)+\real\lambda)^{2}|u_{1}|^{2}
    +(-\xi+\tfrac{1}{2}S'(x)+\real\lambda)^{2}|u_{2}|^{2},
\end{equation*}
\begin{equation*}
    |Y\vec{u}|^{2}=[A(x)^{2}+(\imag\lambda)^{2}][|u_{1}|^{2}+|u_{2}|^{2}]
    +4A(x)(\imag\lambda)\real(\overline{u_{1}}u_{2}),
\end{equation*}
and, taking the convention $\langle a,b\rangle=\overline{a}b$,
\begin{equation*}
    \langle [X,Y]\vec{u},\vec{u}\rangle =-4\xi
    A(x)\imag(\overline{u_{1}}u_{2}).
\end{equation*}

\sgap

To prove that $\lambda\notin \Sigma_{\infty}(p)$, we consider
\begin{equation*}
    \{(x,\xi);\, |x|\geq C\}\cup \{(x,\xi);\, |\xi|\geq C\},
\end{equation*}
for $C>0$ to be determined.

\sgap

In the first case, we take $|\xi|\geq R$, where $R$ is to be
determined, depending only on $||S'||_{\infty}$,
$||A||_{\infty}$, and $\lambda$. We then have
\begin{equation*}
    |\pm\xi +\tfrac{1}{2}S'(x)+\real\lambda|
    \geq |\xi|-|\tfrac{1}{2}S'(x)+\real\lambda|.
\end{equation*}
Hence
\begin{align*}
    |X\vec{u}|^{2}&\geq\left[|\xi|-|\tfrac{1}{2}S'(x)+\real\lambda|\right]^{2}|\vec{u}|^{2}\\
    &\geq\left[\tfrac{1}{2}|\xi|+\tfrac{1}{2}R-|\tfrac{1}{2}S'(x)+\real\lambda|\right]^{2}|\vec{u}|^{2}\\
    &\geq\tfrac{1}{4}|\xi|^{2}|\vec{u}|^{2}
\end{align*}
when $R$ is large enough. Also,
\begin{equation*}
    \langle[X,Y]\vec{u},\vec{u}\rangle
    \geq -2|\xi|A(x)|\vec{u}|^{2},
\end{equation*}
so that
\begin{equation*}
    \begin{aligned}
    |X\vec{u}|^{2}+\langle[X,Y]\vec{u},\vec{u}\rangle
    &\geq |\xi|\left[\tfrac{1}{4}|\xi|-2A(x)\right]|\vec{u}|^{2}\\
    &\geq R\left[\tfrac{1}{4}R-2A(x)\right]|\vec{u}|^{2}.
    \end{aligned}
\end{equation*}
Taking $R\geq 4+8||A||_{\infty}$, we have
$$|(p-\lambda I)\vec{u}|^{2} \geq R|\vec{u}|^{2}.$$

\sgap

In the second case, $\xi$ is bounded: $|\xi|\leq R$. Let
\begin{equation*}
    \epsilon = \min\{ \tfrac{1}{8}|\imag\lambda|,\, \tfrac{1}{8R}|\imag\lambda|^{2}
    \}\quad (>0).
\end{equation*}
We then take $C>0$ to be so large that $A(x)\leq\epsilon$ for all
$|x|\geq C$. Then we have
\begin{equation*}
    \begin{aligned}
    |(p-\lambda I)\vec{u}|^{2}
    &\geq |Y\vec{u}|^{2}+\langle [X,Y]\vec{u},\vec{u}\rangle\\
    &\geq \left[ (\imag\lambda)^{2}-2\epsilon(|\imag\lambda|+|\xi|)
    \right]|\vec{u}|^{2}\\
    &\geq \left[ (\imag\lambda)^{2}-\tfrac{1}{4}(\imag\lambda)^{2}-\tfrac{1}{4R}(\imag\lambda)^{2}|\xi|
    \right]|\vec{u}|^{2}\\
    &\geq\tfrac{1}{2}(\imag\lambda)^{2}|\vec{u}|^{2}.
    \end{aligned}
\end{equation*}

\sgap

So in all cases we have $\lambda\notin\Sigma_{\infty}(p)$, proving that $\Sigma_{\infty}(p)\subset\Rbb$.

\mgap

\section{Discreteness of the Spectrum Away From
$\Rbb$}\label{S:Discreteness}

Here we only very slightly modify the methods of Dencker, Sj\"{o}strand, and Zworski (Proposition~2.19 of \cite{R:DenckerPSsys} and
Proposition~3.3 of \cite{R:DSZ}) to prove discreteness of the
spectrum away from the real line.

\sgap

\begin{proposition}
    Suppose that $A\in\mathcal{S}(\Rbb;\Rbb)$, $A>0$, and that $S$
    is such that $S'\in C^{\infty}_{b}(\Rbb;\Rbb)$.
    Let $\Omega\subset\Cbb$ be an open, connected, and bounded set such that
    \begin{equation*}
        \overline{\Omega}\cap
        \Sigma_{\infty}(p)=\emptyset\quad\text{and}\quad
        \Omega\cap\complement\Sigma(p)\neq\emptyset.
    \end{equation*}
    Then $$(P(h)-zI)^{-1},\qquad 0<h<<1,\, z\in\Omega,$$ is a
    meromorphic family of operators with poles of finite rank. In
    particular, for $h$ sufficiently small, the spectrum of
    $P(h):=P(x,hD)$ is discrete in any such set. When
    $\Omega\cap\Sigma(p)=\emptyset$ we find that $\Omega$
    contains no spectrum of $P(h)$.
\end{proposition}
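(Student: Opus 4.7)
The plan is to follow the parametrix construction of Dencker-Sj\"ostrand-Zworski. The hypothesis $\overline{\Omega}\cap\Sigma_{\infty}(p)=\emptyset$ will provide ellipticity of $p-zI$ at infinity in phase space, uniformly in $z\in\overline{\Omega}$, turning $P(h)-zI$ into a holomorphic Fredholm family of index zero on $\Omega$. The hypothesis $\Omega\cap\complement\Sigma(p)\neq\emptyset$, together with the ellipticity lemma of Section~\ref{S:UpperResolv}, furnishes a single point $z_{0}\in\Omega$ at which $(P(h)-z_{0}I)^{-1}$ is bounded. Meromorphy with poles of finite rank then follows from analytic Fredholm theory.

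Concretely, I would first establish a uniform lower bound $|(p(x,\xi)-zI)\vec{u}|\geq c(1+|\xi|)|\vec{u}|$ for $z\in\overline{\Omega}$ and $|(x,\xi)|\geq C$; this is essentially the two-case splitting already carried out in Section~\ref{S:Sigmas}, but with constants depending only on $\overline{\Omega}$ rather than on a single $\lambda$. It follows that $(p-zI)^{-1}$ lies in $S(1/m)$ outside a compact set in $T^{*}\Rbb$, uniformly in $z\in\overline{\Omega}$. I would then pick a cutoff $\chi\in C^{\infty}_{0}(T^{*}\Rbb;\Rbb)$ equal to $1$ on a neighborhood of this compact set and set
\begin{equation*}
    Q(z):=\mathrm{Op}^{w}_{h}\!\left((1-\chi(x,\xi))(p(x,\xi)-zI)^{-1}\right).
\end{equation*}
The semiclassical pseudodifferential calculus of \cite{R:ZworskiSCA} then yields
\begin{equation*}
    (P(h)-zI)Q(z)=I-K(z),
\end{equation*}
where the principal symbol of $K(z)$ is $\chi(x,\xi)I$ and, after one routine refinement of $Q(z)$ absorbing the $\mathcal{O}(h)$ Moyal corrections into a compactly supported amplitude, $K(z)$ is a compact operator on $L^{2}(\Rbb;\Cbb^{2})$ depending holomorphically on $z\in\Omega$.

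Invertibility of $(P(h)-z_{0}I)$ at a single point $z_{0}\in\Omega\cap\complement\Sigma(p)$ forces $I-K(z_{0})$ to be invertible. Since $\Omega$ is open and connected, analytic Fredholm theory applied to the holomorphic compact-operator-valued family $z\mapsto I-K(z)$ yields that $z\mapsto(I-K(z))^{-1}$ is meromorphic on $\Omega$ with poles of finite rank; composing with $Q(z)$ on the right gives the claimed meromorphy of $(P(h)-zI)^{-1}$. Discreteness of the spectrum in $\Omega$, and the case $\Omega\cap\Sigma(p)=\emptyset$, are immediate. The main obstacle I anticipate is the bookkeeping in the first step: ensuring that the ellipticity estimates and symbol bounds on $(1-\chi)(p-zI)^{-1}$ are uniform in $z\in\overline{\Omega}$, and that after the parametrix refinement the remainder $K(z)$ is genuinely compact with holomorphic dependence on $z$.
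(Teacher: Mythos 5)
There is a genuine gap at the pivotal step of your argument: the assertion that invertibility of $P(h)-z_{0}I$ ``forces $I-K(z_{0})$ to be invertible.'' From your own identity $(P(h)-zI)Q(z)=I-K(z)$ we get $I-K(z_{0})=(P(h)-z_{0}I)Q(z_{0})$, so (given that $P(h)-z_{0}I$ is invertible) invertibility of $I-K(z_{0})$ is \emph{equivalent} to invertibility of $Q(z_{0})$. But $Q(z_{0})$ is the quantization of $(1-\chi)(p-z_{0}I)^{-1}$, a symbol that vanishes identically on the open region where $\chi=1$; such an operator is not invertible in any useful sense (a wave packet microlocalized in $\{\chi=1\}$ is annihilated up to $\mathcal{O}(h^{\infty})$), and in particular you have not exhibited a single $z\in\Omega$ at which $I-K(z)$ is invertible. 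Analytic Fredholm theory then gives you only the alternative, and the bad branch (``invertible for no $z$'') cannot be excluded, so no meromorphy follows. The paper avoids exactly this by gluing the parametrix symbol: $R(w,z)=\chi(w)\,(p(w)-\lambda_{0}I)^{-1}+(1-\chi(w))\,(p(w)-zI)^{-1}$ with $\lambda_{0}\in\Omega\cap\complement\Sigma(p)$, for which $(p-\lambda_{0}I)^{-1}\in C^{\infty}_{b}$ by the argument of Lemma~\ref{L:meaning}. With this choice the compact remainders vanish at $z=\lambda_{0}$, where the operator identity reduces to $I+hB(h,\lambda_{0})$, invertible for small $h$ by Neumann series; that supplies the one point of invertibility the Fredholm argument needs. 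In short, the information at the good point $z_{0}$ must be built into the symbol of the parametrix on the compact set, not extracted afterwards from the operator identity.

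A secondary, fixable inaccuracy: the claim that ``one routine refinement absorbs the $\mathcal{O}(h)$ Moyal corrections into a compactly supported amplitude,'' leaving exactly $I-K(z)$ with $K(z)$ compact, is too optimistic. The subleading terms in the composition involve $\partial_{x}p$ (hence $S''$, which is only bounded, not decaying) against $(p-zI)^{-1}\in S(1/m)$; they are bounded but in general neither compact nor compactly supported. The paper's bookkeeping, $I+hB(h,z)+K(h,z)$ with $B$ bounded and $K$ compact, followed by Neumann inversion of $I+hB$ for small $h$, is the clean way to handle this (alternatively one can iterate the parametrix so the non-compact part is $\mathcal{O}(h^{\infty})$ in norm). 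The remaining ingredients of your plan --- uniform ellipticity near infinity from $\overline{\Omega}\cap\Sigma_{\infty}(p)=\emptyset$ (the paper gets the bound (\ref{E:bdinv}) by a compactness argument rather than by redoing the estimates of Section~\ref{S:Sigmas} uniformly, but either works), holomorphic dependence on $z$, discreteness, and the case $\Omega\cap\Sigma(p)=\emptyset$ --- do match the paper.
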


\sgap

\begin{proof}
    We first claim that $\exists C>0$ such that
    \begin{equation}\label{E:bdinv}
        |(p(w)-zI)^{-1}|\leq C\qquad\text{if }z\in\Omega\text{ and
        }|w|>C.
    \end{equation}
    Suppose not. Then $\exists w_{j}\rightarrow
    \infty$ and $z_{j}\in\Omega$ such that
    $$|(p(w_{j})-z_{j}I)^{-1}|\rightarrow\infty\quad\text{as }j\rightarrow\infty.$$
    Thus $\exists u_{j}\in\Cbb^{2}$ with $|u_{j}|=1$ such that
    $$|(p(w_{j})-z_{j}I)u_{j}|\rightarrow 0.$$
    Since $\Omega$ is bounded, we may take a subsequence such that
    $$z_{j}\rightarrow z\in\overline{\Omega}\cap\Sigma_{\infty}(p)=\emptyset$$
    which of course is impossible.

    \sgap

    Next we show that $\exists\lambda_{0}\in\Omega$ such that $(p(w)-\lambda_{0}I)^{-1}\in
    C^{\infty}_{b}$. In fact, let $\lambda_{0}\in \Omega\cap\complement\Sigma(p)$. By the same argument as in the proof of Lemma~\ref{L:meaning}, there exists some $\epsilon>0$ such that 
    $$|-\xi^{2}+(\tfrac{1}{2}S'(x)+\lambda_{0})^{2}+A(x)^{2}|\geq \epsilon\qquad\forall (x,\xi)\in T^{\ast}\Rbb.$$ Then it is easy to see that 
    \begin{equation*}
    	\begin{aligned}
        		(p(x,\xi)-\lambda_{0}I)^{-1}
		&=\frac{1}{-\xi^{2}+(\frac{1}{2}S'(x)+\lambda_{0})^{2}+A(x)^{2}}
        		\left(
        		\begin{matrix}
            		\xi-\frac{1}{2}S'(x)-\lambda_{0} &iA(x)\\
            		iA(x) &-\xi-\frac{1}{2}S'(x)-\lambda_{0}
        		\end{matrix}
        		\right)\\
		&\in C_{b}^{\infty}.
    	\end{aligned}
    \end{equation*}

    \sgap

    We now let
    $\chi\in C^{\infty}_{0}(T^{*}\Rbb)$, $0\leq\chi(w)\leq 1$, and
    $\chi(w)=1$ when $|w|\leq C$, where $C$ is given by
    (\ref{E:bdinv}). Let
    \begin{equation*}
        R(w,z)=\chi(w)(p(w)-\lambda_{0}I)^{-1}+(1-\chi(w))(p(w)-zI)^{-1}
    \end{equation*}
    for $z\in\Omega$, which, by our previous arguments, is in $C^{\infty}_{b}$.
    The semiclassical symbol calculus then gives
    $$R^{w}(x,hD,z)(P(h)-zI)=I+hB_{1}(h,z)+K_{1}(h,z)$$ and
    $$(P(h)-zI)R^{w}(x,hD,z)=I+hB_{2}(h,z)+K_{2}(h,z),$$ where
    $K_{j}(h,z)$ are compact operators on $L^{2}(\Rbb)$ depending
    holomorphically on $z$, vanishing for $z=z_{0}$, and where the
    $B_{j}(h,z)$ are bounded on $L^{2}(\Rbb)$, $j=1,2$. By the
    analytic Fredholm theory we then have that $(P(h)-zI)^{-1}$
    is meromorphic in $z\in\Omega$ for $h$ sufficiently small.
    When $\Omega\cap\Sigma(p)=\emptyset$ we may take
    $R(w,z)=(p(w)-zI)^{-1}$. Then $K_{j}\equiv 0$ and
    $P(h)-zI$ is invertible for small enough $h$.
\end{proof}

\sgap

\sgap

\noindent
\textit{E-mail address:} \textbf{vanvalke@grinnell.edu}

\end{document}